\date{}
\renewcommand{\uppercasenonmath}[1]{}
\newtheorem{Theorem}{Theorem}[section]
\newtheorem{Corollary}{Corollary }[section]
\newtheorem{Lemma}{Lemma}[section]
\newtheorem{Remark}{Remark}[section]
\newtheorem{Proposition}{Proposition}[section]
\theoremstyle{definition}
\newtheorem{Definition}{Definition}[section]
    \numberwithin{equation}{section}
\begin{document}

\begin{center}

{\large  \bf Regular complete permutation polynomials over quadratic extension fields}

\vskip 0.8cm
{\small Wei Lu$^1$ $\cdot$ Xia Wu$^1$ \footnote{Supported by NSFC (Nos. 11971102, 11801070, 11771007), the Fundamental Research Funds for the Central Universities.

  MSC: 94B05, 94A62}} $\cdot$ Yufei Wang$^1$   $\cdot$ Xiwang Cao$^2$ \\

{\small $^1$School of Mathematics, Southeast University, Nanjing
210096, China}\\
{\small $^2$Department of Math, Nanjing University of Aeronautics and Astronautics, Nanjing 211100, China}\\
{\small E-mail:
 luwei1010@139.com, wuxiadd1980@163.com, 220211734@seu.edu.cn, xwcao@nuaa.edu.cn}\\
{\small $^*$Corresponding author. (Email: wuxiadd1980@163.com)}
\vskip 0.8cm
\end{center}

{\bf Abstract:} Let $r\geq 3$ be any positive integer  which is  relatively prime to $p$ and $q^2\equiv 1 \pmod r$.  Let $\tau_1, \tau_2$ be any permutation polynomials over $\mathbb{F}_{q^2},$ $\sigma_M$ is an invertible linear map  over  $\mathbb{F}_{q^2}$ and $\sigma=\tau_1\circ\sigma_M\circ\tau_2$. In this paper, we prove that, for suitable $\tau_1, \tau_2$ and $\sigma_M$, the map $\sigma$ could be $r$-regular complete permutation polynomials over quadratic extension fields.

{\bf Index Terms:} Cycle structure $\cdot$  Permutation polynomial  $\cdot$  Regular complete permutation polynomial   $\cdot$ Finite field $\cdot$ Linear map

\section{\bf Introduction}

Let $p$ be a prime, $q$ be a power of $p$ and $\mathbb{F}_q$ the finite field with $q$ elements. A polynomial $f(x)\in{\mathbb{F}}_q[x]$ is called a \emph{permutation polynomial} (PP) over ${\mathbb{F}}_q$ if the associated polynomial function $f: c\mapsto f(c)$ from ${\mathbb{F}}_q$ into ${\mathbb{F}}_q$ is a permutation of ${\mathbb{F}}_q$.
A polynomial $f(x)\in{\mathbb{F}}_q[x]$ is called a \emph{complete permutation polynomial} (CPP) over ${\mathbb{F}}_q$ if  both $f(x)$ and $f(x)+x$ are permutations of  ${\mathbb{F}}_q$. These polynomials were introduced by Mann in the construction of orthogonal Latin squares \cite{M1942}. Niederreiter and Robinson later gave a detailed study of CPPs over finite fields \cite{NR1982}.  CPPs have widely applications in the design of nonlinear dynamic substitution device \cite{M1995,M1997},  the Lay-Massey scheme \cite{V1999}, the block cipher SMS4 \cite{DL2008}, the stream cipher Loiss \cite{FFZ2011}, the design of Hash functions \cite{SV1995, V1994}, quasigroups \cite{MM2009,MM2012a,MM2012b}, and  the constructions of some cryptographically strong functions \cite{MP2014,SG2012,ZHC2015}.
The two most important concepts related to a permutation  are the existence of fixed points and the specification of its cycle structure.

 Let $f(x)$ be a PP over ${\mathbb{F}}_q$,  $e$  the identity map over ${\mathbb{F}}_q$ and   $n$  a positive integer. Let $f^{(n)}$ be   the $n$-th composite power of $f$, which is defined inductively by $f^{(n)}:=f\circ f^{(n-1)}=f^{(n-1)}\circ f$ and $f^{(1)}:=f,$ $f^{(0)}:=e$, $f^{(-n)}:=(f^{-1})^{(n)}$.    The  polynomial $f(x)$ is called  an \emph{$n$-cycle permutation} if $f^{(n)}$ equals  the identity map $e$. On the one hand, PPs with long cycles (especially full cycles) can be used to generate key-stream sequences with large periods \cite{F1982, G1967, GG2005}. On the other hand,  PPs with short cycles (especially  involutions $f^{(2)}=e$) can be used   to construct bent functions over finite fields \cite{CM2018,G1962,M2016},  to design codes \cite{G1962} and to  against  some cryptanalytic attacks \cite{CR2015}. In general, it is difficult to  determine the cycle structure of a PP. Very few known PPs whose explicit cycle structures have been obtained (see \cite{A1969, LM1991, RC2004, RMCC2008} for monomials and Dickson polynomials).

 A polynomial $f(x)\in{\mathbb{F}}_q[x]$ is called   an \emph{$r$-regular} PP over ${\mathbb{F}}_q$ if $f(x)$ is a PP over ${\mathbb{F}}_q$ with all the cycles of the same length $r$ (ignoring the fixed points). Regular PPs are very important in applications of turbo-like coding, low-density parity-check codes (LDPC) and block cipher designs \cite{B2003,RC2004,RMCC2008,SSP2012}. In \cite{MP2014}, a recursive construction of CPPs  over finite fields
via subfield functions was proposed to construct CPPs  with no fixed points  over   finite fields  with odd characteristic. The similar technique was used  to construct strong complete mappings  in \cite{M2021} and to construct some  $r$-regular CPPs over ${\mathbb{F}}_q$  with even characteristic for some small positive integers $r$ in \cite{XZZ2022}.  In \cite{LWWC2022}, we generalized the technique used in \cite{MP2014,M2021,XZZ2022} and give a general construction of regular PPs and regular CPPs over extension fields. The maps considered in \cite{LWWC2022} are of the forms $f=\tau\circ\sigma_M\circ\tau^{-1}$ where $\tau$ is a PP over an extension field $\mathbb{F}_{q^d}$ and $\sigma_M$ is an invertible linear map  over  $\mathbb{F}_{q^d}$. In order to get regular CPPs, the difficulty is to make sure that $f+e$ is   a PP over $\mathbb{F}_{q^d}$ when $\tau$ is not additive. In \cite{LWWC2022} we  give several examples of regular CPPs for $r=3,4,5,6,7$ and for arbitrary odd positive integer $r$. For different examples, we  used different methods to prove that $f+e$ is a PP.

The main purpose of this paper is to give new regular CPPs  over  quadratic extension fields $\mathbb{F}_{q^2}$  for arbitrary positive integer $r$. In fact, CPPs  over  quadratic extension fields are enough for practical application. For example, let $p(z)$ be a polynomial from $\mathbb{F}_{q}$ to itself, let $\Omega_p,$ $\Phi_p$ and $\Psi_p$ be three mappings from $\mathbb{F}_{q^2}$ to itself defined by $\Omega_p(x_1, x_2)=(x_2, p(x_2)+x_1),$ $\Phi_p(x_1, x_2)=(x_2, p(x_1)-x_2),$ and $\Psi_p(x_1, x_2)=(p(x_2), p(x_2)+x_1),$ where $x=(x_1, x_2) \in \mathbb{F}_{q^2}.$ Then the maps  $\Omega_p,$ $\Phi_p$ and $\Psi_p$ are CPPs over $\mathbb{F}_{q^2},$  when $p(z)$ is a PP over $\mathbb{F}_{q}$ (see \cite[Lemma 3]{XLZH2018}).  The maps  $\Omega_p,$ $\Phi_p$ and $\Psi_p$ correspond to one-round Feistel structure, L-MISTY structure and R-MISTY structure without round key, respectively. Modern block ciphers base their design on an inner function that is iterated a high number of times. The Feistel and MISTY structures have been used in the design of many block ciphers (in particular the DES \cite{NIST1993} and MISTY \cite{Ma1997}).



This paper is organized as follows. In section \ref{section Preliminaries}, we introduce some basic knowledge about cycle structure of permutation polynomials, cyclotomic polynomials over finite fields and some properties of linear maps in linear algebra. In section \ref{Section 3}, We will give two constructions of $r$-regular  CPPs over quadratic extension fields  for any positive integer $r$ which satisfies $q^2\equiv 1 \pmod r$.
In section \ref{section Concluding remarks}, we conclude this paper.

 \section{\bf Preliminaries}\label{section Preliminaries}

\subsection{Cycle structures of permutation polynomials}

 In this subsection, we prepare and discuss the  cycle structures of permutation polynomials.

   Let $q$ be a prime power and $\mathbb{F}_q$ the finite field with $q$ elements.


 \begin{Definition}\cite[Definition 4]{CMS2016}
 Let $f$ be a PP over ${\mathbb{F}}_q$ and $t$  a positive integer. A \emph{cycle} of $f$ is a subset ${x_1,\dots , x_t}$ of pairwise distinct elements of ${\mathbb{F}}_q$ such that $f(x_i)=x_{i+1}$ for $1\leq i\leq t-1$ and $f(x_t)=x_1$. The cardinality of a cycle is called its \emph{length}.
 \end{Definition}

%
%
%
%
%
%

 \begin{Proposition}\cite[Proposition 2.11]{CWZ2021}\label{proposite 2.2}
Let $f$ and $g$ be  PPs over ${\mathbb{F}}_q$.  Furthermore, $f$ is an $n$-cycle permutation. Then $g\circ f\circ g^{-1}$ is also an $n$-cycle permutation. Moreover, $f$ and $g\circ f\circ g^{-1}$ have the same cycle structures.
 \end{Proposition}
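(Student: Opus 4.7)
The plan is to exploit the fact that conjugation is compatible with composition, so that iterates conjugate cleanly and cycles are mapped to cycles of the same length by $g$. There are really two statements to prove: first that the conjugate is an $n$-cycle permutation, and second that the full cycle structures agree.

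For the first statement, I would prove by induction on $k \geq 1$ that
\[
(g\circ f\circ g^{-1})^{(k)} = g\circ f^{(k)}\circ g^{-1}.
\]
The base case $k=1$ is trivial, and the inductive step follows from
\[
(g\circ f\circ g^{-1})^{(k+1)} = (g\circ f\circ g^{-1})\circ (g\circ f^{(k)}\circ g^{-1}) = g\circ f^{(k+1)}\circ g^{-1},
\]
using that $g^{-1}\circ g = e$. Specializing to $k=n$ and using $f^{(n)}=e$ gives $(g\circ f\circ g^{-1})^{(n)} = g\circ e\circ g^{-1} = e$, so $g\circ f\circ g^{-1}$ is an $n$-cycle permutation.

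For the second statement, I would establish a length-preserving bijection between the cycles of $f$ and the cycles of $h:=g\circ f\circ g^{-1}$ by showing
\[
\{x_1,\ldots,x_t\}\text{ is a cycle of }f \iff \{g(x_1),\ldots,g(x_t)\}\text{ is a cycle of }h.
\]
Indeed, if $f(x_i)=x_{i+1}$ for $1\leq i\leq t-1$ and $f(x_t)=x_1$, then
\[
h(g(x_i)) = g(f(g^{-1}(g(x_i)))) = g(f(x_i)) = g(x_{i+1})
\]
for $1\leq i\leq t-1$, and similarly $h(g(x_t))=g(x_1)$. Since $g$ is a bijection, the $g(x_i)$ are pairwise distinct, so $\{g(x_1),\ldots,g(x_t)\}$ is a cycle of $h$ of length $t$. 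The converse is obtained by the same argument applied to $g^{-1}$ and $h$, noting that $g^{-1}\circ h\circ g = f$. Thus $g$ induces a length-preserving bijection between the cycles of $f$ and those of $h$, so they have the same cycle structure.

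No step is really an obstacle; the only thing to be careful about is verifying the base case and inductive step of the composition identity, and making sure the bijection on cycles is well defined (this uses only that $g$ is a bijection, which is given since $g$ is a PP).
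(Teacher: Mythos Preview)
Your proof is correct and is the standard argument for why conjugation preserves cycle structure. Note, however, that the paper does not supply its own proof of this proposition: it is quoted from \cite[Proposition 2.11]{CWZ2021} and stated without proof, so there is nothing in the paper to compare your argument against. Your treatment---the inductive identity $(g\circ f\circ g^{-1})^{(k)}=g\circ f^{(k)}\circ g^{-1}$ for the $n$-cycle claim, and the explicit length-preserving bijection on cycles induced by $g$ for the cycle-structure claim---is exactly the natural proof one would expect and would be perfectly acceptable as a self-contained justification.
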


 Now we recall the definitions of  regular PPs and  regular CPPs over ${\mathbb{F}}_q$.

\begin{Definition}\cite[Definition 1]{XZZ2022}
A PP $f$ over ${\mathbb{F}}_q$ is called  \emph{$r$-regular} if   all the cycles of $f$ have the same length $r$ (ignoring the fixed points).
\end{Definition}

\begin{Remark}\label{Remark 2.1}
 It is easy to see that,  if  $r$ is a prime, then  any non-identity  $r$-cycle permutation is $r$-regular.
\end{Remark}

\begin{Definition}\cite[Definition 2]{XZZ2022}
A CPP $f$ over ${\mathbb{F}}_q$ is called \emph{$r$-regular}   if   all the cycles of $f$ have the same length $r$ (ignoring the fixed points).
\end{Definition}

\begin{Remark}
 It is easy to see that,  when the characteristic of  ${\mathbb{F}}_q$  is even,   any CPP $f$ has only one fixed point.
\end{Remark}

\subsection{ Linear algebra over finite fields}

In this subsection, we recall a few concepts and facts from linear algebra \cite[p.60, 64, 525]{R2003}.

Let  $d$ be a positive integer and $V:=\mathbb{F}_q^d$   the column   vector space with dimension $d$ over $\mathbb{F}_q.$ Let  $M_{d\times d}(\mathbb{F}_q)$ be the matrix ring over $\mathbb{F}_q$ and $M$  any $d\times d$ matrix in  $M_{d\times d}(\mathbb{F}_q)$.  Then we can define  a \emph{linear map} $\sigma_M$ from $V$ to $V$ by the usual way
 \begin{equation}\label{linear map}
 \sigma_M(v)=Mv,{\ \rm{ for\  any \ }} v\in V.
 \end{equation}

\begin{Definition}
Let $M$ be any $d\times d$ matrix over $\mathbb{F}_q$ and $\sigma_M$ the linear map associate with $M$. We define the \emph{characteristic polynomial} $P_{\sigma_M}(t):=P_M(t)$ to be the determinant
\begin{center}
$\mathrm{det}$$(tI_d-M)$
\end{center}
where $I_d$ is the unit $d\times d$ matrix. The characteristic polynomial is an element of $\mathbb{F}_q[t]$.

 \end{Definition}




The following result  is important in the proof of  our main theorem.

\begin{Theorem}\cite[p.561]{L2002}{(Cayley-Hamilton Theorem)}
Let $M$ be any $d\times d$ matrix over $\mathbb{F}_q.$ We have
\begin{center}
 $P_{\sigma_M}({\sigma_M})=0.$
\end{center}
\end{Theorem}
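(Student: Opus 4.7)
The plan is to prove the Cayley--Hamilton identity by the classical adjugate (cofactor) argument, which works verbatim over any commutative ring and so in particular over $\mathbb{F}_q$. First, I would view $tI_d-M$ as a matrix with entries in the polynomial ring $\mathbb{F}_q[t]$ and form its adjugate $B(t):=\mathrm{adj}(tI_d-M)\in M_{d\times d}(\mathbb{F}_q[t])$. Since each cofactor of $tI_d-M$ is a polynomial in $t$ of degree at most $d-1$, the entries of $B(t)$ lie in $\mathbb{F}_q[t]_{\le d-1}$. The basic identity $A\cdot\mathrm{adj}(A)=\det(A)\,I$ then gives
\[
(tI_d-M)\,B(t)=P_M(t)\,I_d.
\]

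Next, I would expand both sides in powers of $t$. Writing $B(t)=\sum_{i=0}^{d-1}B_i\,t^i$ with $B_i\in M_{d\times d}(\mathbb{F}_q)$ and $P_M(t)=\sum_{i=0}^{d}c_i\,t^i$ with $c_d=1$, and comparing coefficients of $t^i$ on both sides (with the convention $B_{-1}=B_d=0$), I would obtain the system of matrix relations
\[
B_{i-1}-M B_i=c_i I_d,\qquad 0\le i\le d.
\]

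The decisive step is to multiply the $i$-th relation on the left by $M^i$ and sum over $i=0,\dots,d$. The right-hand side produces $\sum_{i=0}^{d}c_i M^i=P_M(M)$, while the left-hand side becomes $\sum_i M^i B_{i-1}-\sum_i M^{i+1}B_i$, which telescopes to $0$ once one reindexes the first sum. This yields $P_M(M)=0$, which is the Cayley--Hamilton identity for $M$, and hence for $\sigma_M$ as well since $\sigma_{M^k}=\sigma_M^{(k)}$.

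The main conceptual obstacle is that one must resist the temptation to ``substitute $t=M$'' directly in $(tI_d-M)B(t)=P_M(t)I_d$: the entries of $B(t)$ are polynomials in $t$ whose coefficient matrices $B_i$ do \emph{not} in general commute with $M$, so a naive substitution is meaningless. Splitting $B(t)$ into its homogeneous components, matching coefficients, and then left-multiplying each relation by the right power of $M$ is precisely what circumvents this non-commutativity and makes the telescoping legitimate.
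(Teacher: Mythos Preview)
Your argument is the classical adjugate proof of Cayley--Hamilton and is correct as written; the coefficient-matching followed by left-multiplication by $M^i$ and telescoping is exactly the way to avoid the illegitimate ``substitute $t=M$'' step, and it works over any commutative base ring, in particular $\mathbb{F}_q$.

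There is, however, nothing to compare against: the paper does not prove this theorem. It is stated as a quotation from Lang's \emph{Algebra} \cite[p.561]{L2002} and used as a black box (specifically, only the consequence that the minimal polynomial divides the characteristic polynomial is needed later). So your proposal supplies a valid self-contained proof where the paper simply cites the literature.
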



If $\sigma$ is a linear map on the vector space $V=\mathbb{F}_q^d$, then a polynomial $h(t)\in \mathbb{F}_q[t]$ is said to \emph{annihilate} $\sigma$ if $h(\sigma)=0$, where 0 is the zero map on $V$. The uniquely determined monic polynomial of least positive degree with this property is called the \emph{minimal polynomial} of $\sigma$. Minimal polynomial  divides any other polynomial in $\mathbb{F}_q[x]$ annihilating $\sigma$. In particular, the minimal polynomial of $\sigma$ divides the characteristic polynomial $P_\sigma$ by Cayley-Hamilton Theorem.

%
%
%


Next, we recall some results about cyclotomic polynomials \cite[p.64]{R2003}.
\begin{Definition}\cite[Definition 2.44]{R2003}
Let $n$ be a positive integer not divisible by $p$, and $\zeta$ a primitive $n$-th root of unity over $\mathbb{F}_q$. Then the polynomial
$$Q_n(x)=\prod\limits_{s=1\atop\gcd (s,n) = 1}^{n} {(x - {\zeta ^s})}$$
is called the \emph{n}-th cyclotomic polynomial over $\mathbb{F}_q$. The degree of $Q_n(x)$ equals $\phi(n)$, where $\phi(n)$  is Euler's function and indicates the number of integers $s$ with $1\leq s\leq n$ that are relatively prime to $n$.
\end{Definition}
The following results are basic.

\begin{Proposition}\cite[Theorem 2.47, Lemma 2.50]{R2003}
Let $n$ be a positive integer not divisible by $p$. Then
\begin{enumerate}
  \item $x^n-1=\prod\limits_{l|n}{Q_l(x)}$;
  \item if $l$ is a divisor of $n$ with $1\leq l<n$, then $Q_n(x)$ divides $(x^n-1)/(x^l-1)$.
\end{enumerate}
\end{Proposition}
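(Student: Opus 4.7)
The plan is to prove both parts by working with the factorization of $x^n-1$ into its roots over an algebraic closure of $\mathbb{F}_q$, then matching up the factors on each side.

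For part (1), I would start by observing that since $\gcd(n,p)=1$, the polynomial $x^n-1$ has no repeated roots (its derivative $nx^{n-1}$ is nonzero and coprime to it). Thus $x^n-1$ splits as the product of $(x-\eta)$ over all $n$-th roots of unity $\eta$. The next step is to partition these roots by their multiplicative order: each $n$-th root of unity $\eta$ has some order $l$ with $l\mid n$, and by definition $\eta$ is a primitive $l$-th root of unity. Conversely, every primitive $l$-th root of unity with $l\mid n$ satisfies $\eta^n=1$. Grouping the linear factors by the order of their roots gives
\[
x^n-1=\prod_{l\mid n}\prod_{\substack{\eta \text{ of order } l}}(x-\eta)=\prod_{l\mid n}Q_l(x),
\]
since $Q_l(x)$ is precisely the product of $(x-\eta)$ over all primitive $l$-th roots of unity.

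For part (2), I would apply part (1) to both $n$ and its divisor $l$. Since $l\mid n$ with $l<n$, every divisor of $l$ is also a divisor of $n$, so
\[
\frac{x^n-1}{x^l-1}=\frac{\prod_{d\mid n}Q_d(x)}{\prod_{d\mid l}Q_d(x)}=\prod_{\substack{d\mid n \\ d\nmid l}}Q_d(x).
\]
Because $n\mid n$ but $n\nmid l$ (as $l<n$), the factor $Q_n(x)$ appears on the right-hand side, giving the divisibility $Q_n(x)\mid (x^n-1)/(x^l-1)$ at once.

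I do not anticipate any genuine obstacle here; the only subtlety is the step justifying that $x^n-1$ is separable, which relies crucially on the hypothesis $p\nmid n$ and must be invoked explicitly before partitioning the roots by their order. Everything else is essentially bookkeeping with the factorization in part (1).
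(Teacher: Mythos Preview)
Your argument is correct. Note, however, that the paper does not supply its own proof of this proposition: it is quoted directly from Lidl--Niederreiter \cite[Theorem 2.47, Lemma 2.50]{R2003} as background material, so there is no in-paper proof to compare against. Your proof is the standard one and matches what one finds in that reference: separability of $x^n-1$ from $p\nmid n$, partition of the roots by multiplicative order for part (1), and then for part (2) the observation that the divisors of $l$ form a subset of the divisors of $n$ not containing $n$ itself.
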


\begin{Lemma}\cite[p.65]{R2003}
If $(n,q)=1$, then $Q_n(x)$ factors into $\phi(n)/d$ distinct monic irreducible polynomials in $\mathbb{F}_q[x]$ of the same degree $d$, where $d$ is the least positive integer such that $q^d\equiv 1 \pmod n$.
\end{Lemma}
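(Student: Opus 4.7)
The plan is to work in a splitting field and exploit the Frobenius action on the primitive $n$-th roots of unity. Let $d$ denote the multiplicative order of $q$ modulo $n$, so $d$ is by definition the least positive integer with $q^d\equiv 1\pmod n$. Since $(n,q)=1$, all $n$-th roots of unity lie in some algebraic closure of $\mathbb{F}_q$, and by the definition of $Q_n(x)$ its roots are precisely the primitive $n$-th roots of unity. I would begin by pinning down which extension of $\mathbb{F}_q$ contains these roots.

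First I would argue that a primitive $n$-th root of unity $\zeta$ belongs to $\mathbb{F}_{q^m}$ if and only if $\zeta^{q^m}=\zeta$, i.e.\ $\zeta^{q^m-1}=1$, i.e.\ $n\mid q^m-1$. Thus the smallest field $\mathbb{F}_{q^m}$ containing $\zeta$ is the one with $m=d$, and the minimal polynomial of $\zeta$ over $\mathbb{F}_q$ has degree $[\mathbb{F}_q(\zeta):\mathbb{F}_q]=d$. Notice that this value of $d$ does not depend on the particular choice of primitive $n$-th root $\zeta$: replacing $\zeta$ by $\zeta^s$ with $\gcd(s,n)=1$ still gives a primitive $n$-th root, which by the same argument has a minimal polynomial of degree $d$.

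Next I would observe that the Galois group of $\mathbb{F}_{q^d}/\mathbb{F}_q$ is cyclic of order $d$, generated by the Frobenius $\varphi\colon x\mapsto x^q$. For a primitive $n$-th root $\zeta$, its $\varphi$-orbit is $\{\zeta,\zeta^q,\ldots,\zeta^{q^{d-1}}\}$, which has exactly $d$ elements (since $q^i\not\equiv 1\pmod n$ for $0<i<d$). This orbit is precisely the set of roots of the minimal polynomial of $\zeta$ over $\mathbb{F}_q$, which is therefore a monic irreducible factor of $Q_n(x)$ of degree $d$.

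Finally, since the set of primitive $n$-th roots of unity partitions into Frobenius orbits each of size $d$, and $Q_n(x)$ is the product of the associated minimal polynomials (all distinct because the orbits are disjoint), I would conclude that $Q_n(x)$ factors in $\mathbb{F}_q[x]$ into $\phi(n)/d$ distinct monic irreducible polynomials of the same degree $d$. The only real subtlety is the uniformity claim, namely that \emph{every} primitive $n$-th root has minimal polynomial of the same degree $d$; this is where I expect to have to be most careful, and it follows cleanly from the observation that $d$ is characterized purely in terms of $q$ and $n$ via $q^d\equiv 1\pmod n$, independently of the chosen root.
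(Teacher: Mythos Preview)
Your argument is correct and is the standard proof: identify the minimal polynomial of each primitive $n$-th root of unity with its Frobenius orbit, observe that every such orbit has size exactly $d$ because $d$ is the multiplicative order of $q$ modulo $n$, and count. The only point worth tightening is the claim that the orbit $\{\zeta,\zeta^q,\ldots,\zeta^{q^{d-1}}\}$ has exactly $d$ elements: you should note that $\zeta^{q^i}=\zeta^{q^j}$ forces $q^{i-j}\equiv 1\pmod n$, not merely $q^{i-j}\not\equiv 1$ being sufficient, but this is immediate.

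As for comparison with the paper: the paper does not prove this lemma at all. It is quoted verbatim from Lidl--Niederreiter \cite[p.~65]{R2003} and used as a black box to derive Corollary~\ref{corollay quadratic}. Your proof is essentially the argument given in that reference (Theorem~2.47 there), so there is nothing to contrast.
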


The following corollary is trivial.

\begin{Corollary}\label{corollay quadratic}
If $q^2\equiv 1 \pmod n$, then $Q_n(x)$ has a quadratic factor in $\mathbb{F}_q[x]$.
\end{Corollary}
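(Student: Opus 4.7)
The plan is to invoke the preceding lemma directly and then do a small case split on the possible value of the common degree $d$. First I would observe that the hypothesis $q^2\equiv 1\pmod n$ immediately forces $\gcd(n,q)=1$, so the lemma applies and tells us that $Q_n(x)$ splits in $\mathbb{F}_q[x]$ into $\phi(n)/d$ distinct monic irreducible factors, all of the same degree $d$, where $d$ is the least positive integer with $q^d\equiv 1\pmod n$.

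Next, from $q^2\equiv 1\pmod n$ together with the minimality of $d$ I would conclude $d\mid 2$, so $d\in\{1,2\}$. If $d=2$, then every irreducible factor produced by the lemma is itself quadratic, and any one of them serves as the desired quadratic divisor of $Q_n(x)$. If $d=1$, then $Q_n(x)$ splits into $\phi(n)$ distinct linear factors over $\mathbb{F}_q$; in the regime of interest for this paper one has $n\geq 3$ (so that $\phi(n)\geq 2$), and the product of any two of those linear factors is a quadratic polynomial in $\mathbb{F}_q[x]$ dividing $Q_n(x)$.

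There is no real obstacle here: the corollary is essentially a reformulation of the lemma combined with the elementary divisibility $d\mid 2$. The only point that deserves any care is the case $d=1$, which I would handle simply by noting that two distinct linear factors can always be multiplied to yield a quadratic divisor under the standing hypothesis $n\geq 3$ of the paper.
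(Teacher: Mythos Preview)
Your proposal is correct and matches the paper's intended argument: the paper simply declares the corollary trivial as a consequence of the preceding lemma, and your case split on $d\in\{1,2\}$ is precisely the natural way to unpack that. Your remark that the case $d=1$ requires $n\ge 3$ (so that $\phi(n)\ge 2$) is apt; the paper silently assumes this, as it only ever applies the corollary with $n=r\ge 3$.
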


 In Corollary \ref{corollay quadratic}, when $q\equiv 1  \pmod  n$, there are ${\phi(n)\choose 2}=\frac{\phi(n)(\phi(n)-1)}{2}$ distinct quadratic polynomials satisfy the condition; while when $q\not\equiv 1\pmod n$ and $q^2\equiv 1 \pmod n$, there are $\frac{\phi(n)}{2}$ distinct quadratic polynomials satisfy the condition.
In fact,  such quadratic factors can be constructed in the following way.

\begin{Proposition}
\begin{enumerate}
  \item When $q\equiv 1  \pmod  n$, let $\mathbb{F}_{q}^*=<\alpha>$ and $\zeta=\alpha^{\frac{q-1}{n}}$. Then  $\zeta$ is a primitive $n$-th root of unity and $\zeta\in \mathbb{F}_{q}$. Let $s,t$ be any positive integers with $1\leq s, t\leq n$ and $\gcd(s,n)=\gcd(t,n)=1.$ Let $h(x)=h_{s,t}(x)=(x-\zeta^s)(x-\zeta^t).$   Then $h(x)\in\mathbb{F}_q[x]$ and $h(x)$ is a quadratic factor of $Q_n(x)$.
        \item When $q\not\equiv 1\pmod n$ and $q^2\equiv 1 \pmod n,$ let $\mathbb{F}_{q^2}^*=<\alpha>$ and $\zeta=\alpha^{\frac{q^2-1}{n}}$. Then  $\zeta$ is a primitive $n$-th root of unity and $\zeta\in \mathbb{F}_{q^2}\backslash \mathbb{F}_{q}$. Let $s$ be any positive integer with $1\leq s\leq n$ and $\gcd(s,n)=1.$ Let $h(x)=h_{s}(x)=(x-\zeta^s)(x-\zeta^{sq}).$   Then $h(x)\in\mathbb{F}_q[x]$ and $h(x)$ is a quadratic factor of $Q_n(x)$.
\end{enumerate}
%
%
\end{Proposition}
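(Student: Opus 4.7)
The plan is to verify each part by checking four things: that $\zeta$ has the claimed order, that it lies in the claimed field, that the coefficients of $h(x)$ lie in $\mathbb{F}_q$, and finally that $h(x)$ divides $Q_n(x)$. Part (1) is essentially formal, while Part (2) requires a short Frobenius argument; that is the only step that takes any real thought.

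For Part (1), I would first observe that $\alpha$ has order $q-1$, so $\zeta=\alpha^{(q-1)/n}$ has order exactly $n$; hence $\zeta$ is a primitive $n$-th root of unity and lies in $\mathbb{F}_q$ by construction. Then $\zeta^s,\zeta^t\in\mathbb{F}_q$, so $h_{s,t}(x)=(x-\zeta^s)(x-\zeta^t)$ automatically has coefficients in $\mathbb{F}_q$. Since $\gcd(s,n)=\gcd(t,n)=1$, both $\zeta^s$ and $\zeta^t$ are again primitive $n$-th roots of unity, hence roots of $Q_n(x)$. Provided $s\not\equiv t\pmod n$ (which is implicit in the counting $\binom{\phi(n)}{2}$ given right before the proposition), the two roots are distinct, and because $\gcd(n,p)=1$ implies $Q_n(x)$ is squarefree, $h_{s,t}(x)\mid Q_n(x)$.

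For Part (2), the order argument is identical: $\zeta=\alpha^{(q^2-1)/n}$ has multiplicative order $n$, so it is a primitive $n$-th root of unity in $\mathbb{F}_{q^2}$. To see $\zeta\notin\mathbb{F}_q$, note that $\zeta\in\mathbb{F}_q$ would force $\zeta^{q-1}=1$, hence $n\mid q-1$, contradicting $q\not\equiv 1\pmod n$. The key step is to show $h_s(x)\in\mathbb{F}_q[x]$. For this I would invoke the Frobenius: since $q^2\equiv 1\pmod n$, we have $(\zeta^s)^q=\zeta^{sq}$ and $(\zeta^{sq})^q=\zeta^{sq^2}=\zeta^s$, so the set $\{\zeta^s,\zeta^{sq}\}$ is stable under $x\mapsto x^q$. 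Consequently the elementary symmetric functions $\zeta^s+\zeta^{sq}$ and $\zeta^{s+sq}$ are fixed by Frobenius and thus lie in $\mathbb{F}_q$, giving $h_s(x)\in\mathbb{F}_q[x]$.

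To finish Part (2), I would check that both roots of $h_s$ are distinct primitive $n$-th roots of unity. Since $\gcd(s,n)=1$ and $\gcd(q,n)=1$, we also have $\gcd(sq,n)=1$, so $\zeta^s$ and $\zeta^{sq}$ are both primitive $n$-th roots of unity, hence roots of $Q_n(x)$. They are distinct because $\zeta^s=\zeta^{sq}$ would mean $s(q-1)\equiv 0\pmod n$, i.e. $q\equiv 1\pmod n$, which is excluded by hypothesis. As $Q_n(x)$ is squarefree, $h_s(x)\mid Q_n(x)$, completing the proof. The main conceptual point in the whole argument is the Frobenius-orbit observation in Part (2); everything else is a direct verification.
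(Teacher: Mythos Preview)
The paper states this proposition without proof, treating it as a routine constructive remark following Corollary~\ref{corollay quadratic}; there is therefore no argument in the paper to compare against. Your proof is correct and complete: the order computations, the Frobenius-stability argument in Part~(2), and the distinctness checks are all sound, and you rightly flag the implicit hypothesis $s\not\equiv t\pmod n$ in Part~(1), which the paper leaves tacit but is forced by the $\binom{\phi(n)}{2}$ count preceding the statement.
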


\section{\bf Constructions   of regular  CPPs over quadratic extension fields}\label{Section 3}

In this section, let $r\geq 3$ be any positive integer  which is  relatively prime to $p$ and $q^2\equiv 1 \pmod r$. We will give two constructions of $r$-regular  CPPs over quadratic extension fields.

\begin{Theorem}\label{main theorem}
Let $r\geq 3$ be any positive integer, which is relatively prime to $p$. Assume that $q^2\equiv 1 \pmod r$. By Corollary \ref{corollay quadratic}, let $h(t)=t^2+h_1t+h_0\in\mathbb{F}_q[t]$ be a factor of $Q_r(t)$ and $M\in M_{2\times2}(\mathbb{F}_q)$ satisfies that $P_M(t)=h(t)$. Let $\tau_1, \tau_2$ be any PPs over $\mathbb{F}_{q^2}$ and $\sigma=\tau_1\circ\sigma_M\circ\tau_2$. Then
\begin{enumerate}
  \item $\sigma$ is a  PP over $\mathbb{F}_{q^2};$
  \item if $\tau_1\circ\tau_2=e$, then $\sigma$ is an $r$-regular PP over $\mathbb{F}_{q^2};$
  \item if $\tau_1\circ\tau_2=e$ and $\tau_1$ is additive, then $\sigma$ is an $r$-regular CPP over $\mathbb{F}_{q^2}$;
  \item if
           \begin{equation*}
           M=\begin{pmatrix}
            -h_0      & mh_0(h_0-h_1+1)\\ -m^{-1} &    h_0-h_1
            \end{pmatrix}
           \end{equation*}

  where $m\in \mathbb{F}_q^*$, $\tau_1(x_1,x_2)=(a_1(x_1),x_2)$ and $\tau_2(x_1,x_2)=(a_2(x_1),x_2)$, where $a_1, a_2$ are any PPs over $\mathbb{F}_q$, then $\sigma$ is a  CPP over $\mathbb{F}_{q^2}$. Moreover, if $a_1\circ a_2=e,$ then $\sigma$ is an r-regular CPP over $\mathbb{F}_{q^2};$
  \item if
      \begin{equation*}
        M=\begin{pmatrix}
            -h_1+1      & m(h_0-h_1+1)\\  -m^{-1} &    -1
            \end{pmatrix}
      \end{equation*}

  where $m\in \mathbb{F}_q^*$, $\tau_1(x_1,x_2)=(a_1(x_1),x_2)$ and $\tau_2(x_1,x_2)=(a_2(x_1),x_2)$, where $a_1, a_2$ are any PPs over $\mathbb{F}_q$, then $\sigma$ is a  CPP over $\mathbb{F}_{q^2}$. Moreover, if $a_1\circ a_2=e,$ then $\sigma$ is an r-regular CPP over $\mathbb{F}_{q^2}.$
\end{enumerate}
\end{Theorem}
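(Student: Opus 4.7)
The plan is to handle parts (1)--(5) in order, each resting on the same core setup. Since $h(t)\mid Q_r(t)\mid t^r-1$ and $\gcd(r,p)=1$, $h$ is separable with both roots being distinct primitive $r$-th roots of unity, and Cayley--Hamilton gives $M^r=I_2$. In both the irreducible case and the split case, the minimal polynomial of $M$ has to equal $h$ (it is a nonconstant divisor of $h=P_M$, and neither linear factor alone can annihilate $M$ without forcing $P_M$ to be a square), so the order of $M$ is exactly $r$. This makes $\sigma_M$ an invertible $\mathbb F_q$-linear map on $V=\mathbb F_q^2\cong\mathbb F_{q^2}$, which already settles (1) because $\sigma$ is then a composition of three PPs.

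For (2), $\tau_1\circ\tau_2=e$ gives $\tau_2=\tau_1^{-1}$, and Proposition~\ref{proposite 2.2} reduces the claim to showing that $\sigma_M$ is $r$-regular. I would argue via annihilators: for $v\in V$, let $\mu_v\in\mathbb F_q[t]$ be its monic annihilator. Then $\mu_v\mid h$ and the orbit length of $v$ equals the smallest $d\geq 1$ with $\mu_v\mid t^d-1$. When $v\neq 0$, every root of the nonconstant divisor $\mu_v$ of $h$ is a primitive $r$-th root of unity, forcing $d=r$; the unique fixed point is $0$ because $1$ is not a primitive $r$-th root of unity when $r\geq 3$, so $(t-1)\nmid h$. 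For (3), additivity of $\tau_1$ together with $\tau_1\circ\tau_2=e$ allows me to collapse
\[
\sigma+e \;=\; \tau_1\circ\sigma_M\circ\tau_2+\tau_1\circ\tau_2 \;=\; \tau_1\circ(\sigma_M+e)\circ\tau_2,
\]
so I need only check $M+I_2$ invertible; $\det(M+I_2)=P_M(-1)=h(-1)\neq 0$ because $-1$ is never a primitive $r$-th root of unity for $r\geq 3$, and part (2) then upgrades this CPP to $r$-regular.

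The heart of the theorem is (4) and (5), where $\tau_1$ is no longer additive and the shortcut above fails; I would invert $\sigma+e$ directly in coordinates. Writing $(y_1,y_2)=(\sigma+e)(x_1,x_2)$ gives
\begin{align*}
y_1 &= a_1\bigl(m_{11}a_2(x_1)+m_{12}x_2\bigr)+x_1,\\
y_2 &= m_{21}a_2(x_1)+(m_{22}+1)x_2.
\end{align*}
In case (4), $m_{22}+1=h_0-h_1+1=h(-1)\neq 0$, so I solve $y_2$ for $x_2$ in terms of $x_1$ and substitute into the argument of $a_1$; the specific entries of $M$ are arranged so that $m_{11}(m_{22}+1)=m_{12}m_{21}$ (equivalently $\det M=-m_{11}$), which cancels the $a_2(x_1)$ contributions and leaves $y_1=a_1(mh_0\,y_2)+x_1$, from which $x_1$ and then $x_2$ are uniquely determined. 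In case (5), $m_{22}=-1$ eliminates $x_2$ from $y_2$, producing $x_1=a_2^{-1}(-my_2)$ directly, after which $y_1$ determines $x_2$ via $m_{12}=m(h_0-h_1+1)\neq 0$. In both cases $a_1\circ a_2=e$ gives $\tau_1\circ\tau_2=e$, so part (2) upgrades the CPP to $r$-regular. The main obstacle is conceptual: recognising that the displayed matrices are engineered precisely so that the cancellation identity $m_{11}(m_{22}+1)=m_{12}m_{21}$ in (4), or $m_{22}=-1$ in (5), decouples the coordinate system in the presence of the nonlinear $a_1,a_2$. Once that is spotted, the inversion itself is routine algebra.
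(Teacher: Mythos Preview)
Your argument is correct throughout. Parts (1)--(3) match the paper's proof essentially line for line: invertibility of $M$ via $h(0)\neq 0$, $r$-regularity of $\sigma_M$ via annihilators (the paper phrases this with an invariant subspace $W$ and the minimal polynomial of $\sigma_M|_W$, but it is the same idea), and the collapse $\sigma+e=\tau_1\circ(\sigma_M+e)\circ\tau_1^{-1}$ under additivity together with $h(-1)\neq 0$.

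The genuine divergence is in (4) and (5). You invert $(\sigma+e)$ directly in coordinates, exploiting the cancellation $m_{11}(m_{22}+1)=m_{12}m_{21}$ in (4) and $m_{22}+1=0$ in (5) to decouple the system. The paper instead exhibits an explicit factorisation $\sigma+e=\tau_3\circ\sigma_{M_1}\circ\tau_4$ (respectively $\tau_5\circ\sigma_{M_1}\circ\tau_6$) with $M_1=\big(\begin{smallmatrix}-h_1+1 & -h_0\\ 1 & 1\end{smallmatrix}\big)$, and concludes permutation from the composition of three PPs. Your route is more elementary and makes transparent \emph{why} those particular matrices were chosen. The paper's factorisation buys something extra, however: when $a_1=a_2^{-1}$ is additive one checks $\tau_3\circ\tau_4=e$, so $\sigma+e$ is conjugate to $\sigma_{M_1}$ and inherits its cycle structure---information that is not immediately visible from a bare inversion. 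One small omission in your write-up: you should record the one-line verification that the displayed matrices in (4) and (5) actually have $P_M(t)=h(t)$ (trace $-h_1$, determinant $h_0$), since parts (1)--(2) are invoked for those specific $M$.
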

\begin{proof}
\begin{enumerate}
  \item Note that $0\neq h(0)=P_M(0)=\mathrm{det}$$(0I_2-M)=\mathrm{det}$$(-M)=(-1)^2\mathrm{det}(M)$. Hence $M$ is an invertible matrix and $\sigma_M$ is an invertible linear map. Combining that $\tau_1$ and $\tau_2$ are PPs over $\mathbb{F}_{q^2}$, we have $\sigma=\tau_1\circ\sigma_M\circ\tau_2$ is a PP over $\mathbb{F}_{q^2}$.
  \item We only need to prove that $\sigma$ is $r$-regular. First, we prove that $\sigma_M$ is $r$-regular. For any $x\in \mathbb{F}_{q^2}^*$, let $l=l_x$ be its length in  $\sigma_M$. That is to say, $x, {\sigma_M}x,\ldots,\sigma_M^{l-1}x$ are pairwise distinct and ${\sigma_M^l}x=x$. Let
$$W:={\mathbb{F}_q}x+{\mathbb{F}_q}{\sigma_M}x+\cdots+{\mathbb{F}_q}{\sigma_M^{l-1}}x.$$
Then $W$ is an invariant subspace of $\sigma_M$. Moreover, $h(t)$ and $t^l-1$ are two annihilating polynomials of $\sigma_M|_W$.
Let $g(t)$ be the minimal polynomial of $\sigma_M|_W$. Then
$$g(t)\ |\ (h(t),t^l-1).$$
By $h(t)\ |\ Q_r(t)$, we have $g(t)\ |\ (Q_r(t),t^l-1)$. If $l<r$, then by \cite[Lemma 2.50, p.66] {R2003}, $(Q_r(t),t^l-1)=1$ and $g(t)=1$, that is a contradiction. So $l=r$ and $\sigma_M$ is $r$-regular. By Proposition \ref{proposite 2.2}, we have $\sigma=\tau_1\circ\sigma_M\circ\tau_1^{-1}$ is also $r$-regular.
  \item We only need to prove that $\sigma+e$ is a permutation of $\mathbb{F}_{q^2}$. First, we prove that $\sigma_M+e$ is a permutation of $\mathbb{F}_{q^2}$. Since $(r,p)=1$, $r\geq3$ and $h(t)\mid Q_r(t)$, we have   $h(-1)\neq0$. Note that the matrix of $\sigma_M+e$ is $M+I_2$ and its characteristic polynomial is $P_{\sigma_M+e}(t)=\mathrm{det}$$(tI_2-(M+I_2))=\mathrm{det}$$((t-1)I_2-M)=P_M(t-1)=h(t-1)$. Hence $(-1)^2\mathrm{det}(M+I_2)=P_{\sigma_M+e}(0)=h(-1)\neq 0$, $M+I_2$ is an invertible matrix and $\sigma_M+e$ is an invertible linear map. Next, since $\tau_1$ is additive, we have
\begin{equation*}
      \tau_1\circ(\sigma_M+e)\circ\tau_1^{-1}=\tau_1\circ(\sigma_M\circ\tau_1^{-1}+\tau_1^{-1})=\tau_1\circ\sigma_M\circ\tau_1^{-1}+\tau_1\circ\tau_1^{-1}=\sigma+e.
\end{equation*}
       By Proposition \ref{proposite 2.2}, we have $\sigma+e=\tau_1\circ(\sigma_M+e)\circ\tau_1^{-1}$ is a PP over $\mathbb{F}_{q^2}$.
  \item It is easy to check that $P_{M}(t)=h(t),$ by (1), $\sigma$ is a  PP over $\mathbb{F}_{q^2}.$  Next, we  prove that $\sigma+e$ is PP over $\mathbb{F}_{q^2}$. Let
  \begin{equation}
  \begin{aligned}
  \tau_3\begin{pmatrix}
  x_1\\
  x_2\\
  \end{pmatrix}
  =
  \begin{pmatrix}
  a_1(x_1)+a_2^{-1}(x_2)\\
  -m^{-1}h_0^{-1}x_1\\
  \end{pmatrix}
  ,
  \tau_4\begin{pmatrix}
  x_1\\
  x_2\\
  \end{pmatrix}
  =
  \begin{pmatrix}
  mh_0x_2\\
  a_2(x_1)-mh_0x_2\\
  \end{pmatrix}
  \end{aligned}
  \end{equation}
  and
  \begin{equation*}
    M_1=\begin{pmatrix}
           -h_1+1 & -h_0\\
            1 &  1
            \end{pmatrix}.
  \end{equation*}
  It is easy that $\tau_3,\tau_4,\sigma_{M_1}$ are PPs over $\mathbb{F}_{q^2}$. By calculating, we have
  \begin{equation}
  \begin{aligned}
  (\sigma+e)\begin{pmatrix}
  x_1\\
  x_2\\
  \end{pmatrix}
  =
  \begin{pmatrix}
  a_1(-h_0a_2(x_1)+mh_0(h_0-h_1+1)x_2)+x_1\\
  -m^{-1}a_2(x_1)+(h_0-h_1+1)x_2\\
  \end{pmatrix}
  \end{aligned}
  \end{equation}
  \begin{equation}
  \begin{aligned}
  =\tau_3\circ\sigma_{M_1}\circ\tau_4
  \begin{pmatrix}
  x_1\\
  x_2\\
  \end{pmatrix}.
  \end{aligned}
  \end{equation}
  So $\sigma+e=\tau_3\circ\sigma_{M_1}\circ\tau_4$ is a PP over $\mathbb{F}_{q^2}$ and $\sigma$ is a  CPP over $\mathbb{F}_{q^2}$. Moreover, if $a_1\circ a_2=e,$  then  $\tau_1\circ\tau_2=e.$ Combining (2),  we have that  $\sigma$ is an $r$-regular CPP over $\mathbb{F}_{q^2}.$
  \item It is easy to check that $P_{M}(t)=h(t),$ by (1), $\sigma$ is a  PP over $\mathbb{F}_{q^2}.$  Next, we  prove that $\sigma+e$ is PP over $\mathbb{F}_{q^2}$. Let
  \begin{equation}
  \begin{aligned}
  \tau_5\begin{pmatrix}
  x_1\\
  x_2\\
  \end{pmatrix}
  =
  \begin{pmatrix}
  a_1(x_1)+a_2^{-1}(x_2)\\
  -m^{-1}x_2\\
  \end{pmatrix}
  ,
  \tau_6\begin{pmatrix}
  x_1\\
  x_2\\
  \end{pmatrix}
  =
  \begin{pmatrix}
  a_2(x_1)+mx_2\\
  -mx_2\\
  \end{pmatrix}
  \end{aligned}
  \end{equation}
  and
  \begin{equation*}
    M_1=\begin{pmatrix}
           -h_1+1 & -h_0\\
            1 &  1
            \end{pmatrix}.
  \end{equation*}
  It is easy that $\tau_5,\tau_6,\sigma_{M_1}$ are PPs over $\mathbb{F}_{q^2}$. By calculating, we have
  \begin{equation}
  \begin{aligned}
  (\sigma+e)\begin{pmatrix}
  x_1\\
  x_2\\
  \end{pmatrix}
  =
  \begin{pmatrix}
  a_1((-h_1+1)a_2(x_1)+m(h_0-h_1+1)x_2)+x_1\\
  -m^{-1}a_2(x_1)\\
  \end{pmatrix}
  \end{aligned}
  \end{equation}
  \begin{equation}
  \begin{aligned}
  =\tau_5\circ\sigma_{M_1}\circ\tau_6
  \begin{pmatrix}
  x_1\\
  x_2\\
  \end{pmatrix}.
  \end{aligned}
  \end{equation}
  So $\sigma+e=\tau_5\circ\sigma_{M_1}\circ\tau_6$ is a PP over $\mathbb{F}_{q^2}$ and $\sigma$ is a  CPP over $\mathbb{F}_{q^2}$. Moreover, if $a_1\circ a_2=e,$  then  $\tau_1\circ\tau_2=e.$ Combining (2),  we have that  $\sigma$ is an $r$-regular CPP over $\mathbb{F}_{q^2}.$
\end{enumerate}
\qed\end{proof}

\begin{Remark}
\begin{enumerate}
  \item If $M'=P^{-1}MP$, then $P_{M'}(t)=P_M(t)$. So when $\tau$ is additive, there are many matrices $M$ can be used to construct $r$-regular CPP over $\mathbb{F}_{q^2}.$
  \item In the proof of $(3),$ when $\tau_1\circ\tau_2=e$ and $\tau_1$ is additive,  we have $\sigma+e=\tau_1\circ(\sigma_M+e)\circ\tau_1^{-1},$ and then $\sigma+e$ is a PP over $\mathbb{F}_{q^2}$. But when $\tau_1$ is not additive, $\sigma+e\neq\tau_1\circ(\sigma_M+e)\circ\tau_1^{-1}$ in general, and it is difficult to know when $\sigma+e$ is a PP over $\mathbb{F}_{q^2}$.
  \item In $(4)$ and $(5),$ $a_1$ is any PP over $\mathbb{F}_q$ and $m$ is any element in $\mathbb{F}_q^*$. So we also can give many $r$-regular CPPs over $\mathbb{F}_{q^2}$.
  \item In $(4)$, if $a_1=a_2^{-1}$ is a additive, then $\tau_3\circ\tau_4=e$ and $\sigma+e=\tau_3\circ\sigma_{M_1}\circ\tau_3^{-1}$. Since $\sigma+e=\tau_3\circ\sigma_{M_1}\circ\tau_3^{-1}$ and $\sigma_{M_1}$ have the same cycle structures, we can get some information about the cycle structure of $\sigma+e$. But if $a_1=a_2^{-1}$  is not additive, then $\tau_3\circ\tau_4\neq e$ in general and it is not easy to get the cycle structures of $\sigma+e$.
      \item Similarly, in $(5)$, if $a_1=a_2^{-1}$ is a additive, then $\tau_5\circ\tau_6=e$ and $\sigma+e=\tau_5\circ\sigma_{M_1}\circ\tau_5^{-1}$. Since $\sigma+e=\tau_5\circ\sigma_{M_1}\circ\tau_5^{-1}$ and $\sigma_{M_1}$ have the same cycle structures, we can get some information about the cycle structure of $\sigma+e$. But if $a_1=a_2^{-1}$  is not additive, then $\tau_5\circ\tau_6\neq e$ in general and it is not easy to get the cycle structures of $\sigma+e$.
\end{enumerate}
\end{Remark}

Next, we discuss the univariate form and multivariable form of the same polynomial from the quadratic extension field  $\mathbb{F}_{q^2}$ to itself through the dual basis. For  other relations between univariate forms and multivariable forms, one can see \cite{MP2014,XZZ2022}.

Let $\operatorname{Tr}_q^{q^2}(x)=x+x^{q}$ be the  \emph{trace function} from $\mathbb{F}_{q^2}$ to $\mathbb{F}_q.$
Let $\alpha\in\mathbb{F}_{q^2}\backslash \mathbb{F}_{q}$.
Then  $\mathbb{F}_{q^2}$ is a vector space with dimension $2$ over $\mathbb{F}_q$ and $\{\alpha_1=1, \alpha_2= \alpha\}$ is a basis. Each element $\mathbb{F}_{q^2}$ can be uniquely represented as
\begin{equation*}\label{isomorphism}
x=x_1+\alpha x_2=\alpha_1x_1+\alpha_2 x_2
\end{equation*}
where $x_1, x_2\in \mathbb{F}_q.$
Let  $\{\beta_1, \beta_2\}$ be the dual basis of $\{\alpha_1, \alpha_2\}$, see \cite[p.58]{R2003}. Then  for  $1 \leq i, j \leq 2,$  we have
\begin{equation*}
\operatorname{Tr}_q^{q^2}\left(\alpha_{i} \beta_{j}\right)=\left\{\begin{array}{ll}
1 & ,\text { for } i=j, \\
0 & ,\text { for } i \neq j.
\end{array}\right.
\end{equation*}
By calculating, we have
\begin{equation*}
  \operatorname{Tr}_q^{q^2}\left(x \beta_{1}\right)=\operatorname{Tr}_q^{q^2}\left(\alpha_{1} \beta_{1}\right)x_1+\operatorname{Tr}_q^{q^2}\left(\alpha_{2} \beta_{1}\right)x_2=x_1.
\end{equation*}
and
\begin{equation*}
  \operatorname{Tr}_q^{q^2}\left(x \beta_{2}\right)=\operatorname{Tr}_q^{q^2}\left(\alpha_{1} \beta_{2}\right)x_1+\operatorname{Tr}_q^{q^2}\left(\alpha_{2} \beta_{2}\right)x_2=x_2.
\end{equation*}
That is to say:
\begin{equation*}
\left\{\begin{array}{ll}
x_1=\operatorname{Tr}_q^{q^2}\left(x \beta_{1}\right)=x{\beta_1}+x^q{\beta_1}^q, \\
x_2=\operatorname{Tr}_q^{q^2}\left(x \beta_{2}\right)=x{\beta_2}+x^q{\beta_2}^q.
\end{array}\right.
\end{equation*}

Now, we can rewrite the ($r$-regular) CPPs in Theorem \ref{main theorem} (4) or (5) in the univariate form. Let $\tau_1(x_1,x_2)=(a_1(x_1),x_2)$ and $\tau_2(x_1,x_2)=(a_2(x_1),x_2)$, where $a_1, a_2$ are any PPs over $\mathbb{F}_q$. Let $M=\big(\begin{smallmatrix}m_1 & m_2\\ m_3 & m_4\end{smallmatrix}\big)\in M_{2\times2}(\mathbb{F}_q)$.  By calculating, we have
  \begin{equation}
  \begin{aligned}
  \sigma\begin{pmatrix}
  x_1\\
  x_2\\
  \end{pmatrix}
  =
  \begin{pmatrix}
  a_1(m_1a_2(x_1)+m_2x_2)\\
  m_3a_2(x_1)+m_4x_2\\
  \end{pmatrix}
  \end{aligned}.
  \end{equation}
Through the basis $\{\alpha_1=1, \alpha_2= \alpha\}$,  we can get  a ($r$-regular) univariate complete permutation polynomial $f(x)=f_{\sigma}(x)\in \mathbb{F}_{q^2}[x]$ such that
\begin{eqnarray*}
   f(x) &=& a_1(m_1a_2(x_1)+m_2x_2)+\alpha(m_3a_2(x_1)+m_4x_2) \\
        &=& a_1(m_1a_2(x{\beta_1}+x^q{\beta_1}^q)+m_2(x{\beta_2}+x^q{\beta_2}^q)) \\
        & & +\alpha(m_3a_2(x{\beta_1}+x^q{\beta_1}^q)+m_4(x{\beta_2}+x^q{\beta_2}^q)).
\end{eqnarray*}

\section {\bf{Concluding remarks}}\label{section Concluding remarks}
In this paper, we  give two constructions of $r$-regular  CPPs over quadratic extension fields when $q^2\equiv 1 \pmod r$. These $r$-regular  CPPs have  the forms $\sigma=\tau\circ\sigma_M\circ\tau^{-1}$ where $\tau$ is a PP over a quadratic extension field $\mathbb{F}_{q^2}$ and $\sigma_M$ is an invertible linear map  over  $\mathbb{F}_{q^2}$. In further research, we want to give $r$-regular  CPPs over other extension fields.

 {}

\begin{thebibliography}{}

 \bibitem{A1969} Ahmad S.: Cycle structure of automorphisms of finite cyclic groups. J. Comb. Theory 6, 370-374 (1969).
 \bibitem{B2003} Biryukov A.: Analysis of involutional ciphers: Khazad and Anubis. Fast Softw. Encryption 2887, 45-53 (2003).
\bibitem{CR2015} Canteaut A., Roue J.: On the behaviors of affine equivalent S-boxes regarding differential and linear attacks, in: Advances in Cryptology - EUROCRYPT 2015 - 34th Annual International Conference on the Theory and Applications of Cryptographic Techniques, Sofia, Bulgaria, April 26-30, 2015, in: Lecture Notes in Computer Science, Part I, vol. 9056, Springer, pp. 45-74 (2015).
\bibitem{CMS2016} Charpin P., Mesnager S., Sarkar S.: Involutions over the Galois field $\mathbb{F}_{2^n}$. IEEE Trans. Inf. Theory 62 (4), 2266-2276  (2016).
\bibitem{CWZ2021}Chen Y., Wang L., Zhu S.: On the constructions of n-cycle permutations. Finite Fields Appl. 73, 101847 (2021).
\bibitem{CM2018} Coulter R.S., Mesnager S.: Bent functions from involutions over $\mathbb{F}_{2^n}$. IEEE Trans. Inf. Theory 64 (4), 2979-2986 (2018).
\bibitem{DL2008} Diffie W., Ledin G. (translators): SMS4 encryption algorithm for wireless networks. https://eprint.iacr.org/2008/329.pdf.
\bibitem{FFZ2011} Feng D., Feng X., Zhang W., et al.: Loiss: a byte-oriented stream cipher. In: IWCC'11 Proceedings of the Third International Conference on Coding and Cryptology,   109-125. Springer, New York (2011).
\bibitem{F1982} Fredricksen H.: A survey of full length nonlinear shift register cycle algorithms. SIAM Rev. 24(2), 195-221 (1982).
\bibitem{G1962} Gallager R.: Low-density parity-check codes. IRE Trans. Inf. Theory 8 (1),  21-28  (1962).
\bibitem{G1967} Golomb S.W.: Shift Register Sequences. Holden-Day Inc, Laguna Hills (1967).
\bibitem{GG2005} Golomb S.W., GongG.:Signal Design for Good Correlation. For Wireless Communication, Cryptography, and Radar. Cambridge University Press, New York (2005).
\bibitem{L2002} Lang, S.: Algebra. Springer New York, (2002).
\bibitem{LM1991} Lidl R., Mullen G.L.: Cycle structure of Dickson permutation polynomials. Math. J. Okayama Univ. 33, 1-11 (1991).
\bibitem{LWWC2022}Lu W., Wu X., Wang Y., Cao X.: A general construction of regular complete permutation polynomials, arXiv preprint, arXiv:2212.12869, 2022.
\bibitem{M1942}Mann H.B.: The construction of orthogonal Latin squares. Ann. Math. Stat. 13(4), 418-423 (1942).
\bibitem{MM2009}Markovski S., Mileva A.: Generating huge quasigroups from small non-linear bijections via extended Feistel function. Quasigroups Relat. Syst. 17(1), 91-106 (2009).
\bibitem{Ma1997}Matsui M.: New block encryption algorithm MISTY.  In: Fast Software Encryption-FSE'97. Lect. Notes Comput. Sci, vol. 1267, pp. 54-68. Springer, New York (1997).
\bibitem{M2016} Mesnager S.: On constructions of bent functions from involutions, in: 2016 IEEE International Symposium on Information Theory (ISIT), IEEE,  110-114 (2016).
\bibitem{MM2012a}Mileva A., Markovski S.: Quasigroup representation of some Feistel and generalized Feistel ciphers. In: ICT Innovations 2012. Advances in Intelligent Systems and Computing, vol. 207,  161-171. Springer, Berlin (2012).
\bibitem{MM2012b}Mileva A., Markovski S.: Shapeless quasigroups derived by Feistel orthomorphisms. Glas. Mat. 47(67), 333-349 (2012).
\bibitem{M1995}Mittenthal L.: Block substitutions using orthomorphic mappings. Adv. Appl. Math. 16(10), 59-71 (1995).
\bibitem{M1997}Mittenthal L.: Nonlinear dynamic substitution devices and methods for block substitutions employing coset decompositions and direct geometric generation. US Patent 5647001 (1997).
\bibitem{MP2014}Muratovic-Ribic A., Pasalic E.: A note on complete polynomials over finite fields and their applications in cryptography. Finite Fields Appl. 25, 306-315 (2014).
\bibitem{M2021}Muratovic-Ribic, A., On generalized strong complete mappings and mutually orthogonal Latin squares. Ars Mathematica Contemporanea 21(2) (2021).
\bibitem{NIST1993}National Institute of Standards and Technology.: Data Encryption Standard, FIPS Publication 46-2 (1993).
\bibitem{NR1982}Niederreiter H., Robinson K.H.: Complete mappings of finite fields. J. Aust. Math. Soc. A 33(2), 197-212 (1982).
\bibitem{RC2004}Rubio I., Corrada C.: Cyclic decomposition of permutations of finite fields obtained using monomials,. Finite Fields and Applications, LNCS 2948,   254-261, Springer, New York (2004).
\bibitem{RMCC2008}Rubio I., Mullen G.L., Corrada C., Castro F.N.: Dickson permutation polynomials that decompose in cycles of the same length. Contemp. Math. 461, 229-240 (2008).
\bibitem{R2003}Rudolf Lidl, Harald Niederreiter: Finite fields. Encyclopedia of Mathematics and ITS Applications, (2003).
\bibitem{SSP2012}Sakzad A., Sadeghi M.R., Panario D.: Cycle structure of permutation functions over finite fields and their applications. Adv. Math. Commun. 6(3), 347-361 (2012).
\bibitem{SV1995}Schnorr C.P., Vaudenay S.: Black box cryptanalysis of hash networks based on multipermutations. In: Advances in Cryptology-Eurocrypt'94,  47-57. Springer, New York (1995).
\bibitem{SG2012}Stanica P., Gangopadhyay S., Chaturvedi A., Gangopadhyay A.K., Maitra S.: Investigations on bent and negabent functions via the negaHadamard transform. IEEE Trans. Inf. Theory 58, 4064-4072 (2012).
\bibitem{V1994}Vaudenay S.: On the need for multipermutations: cryptanalysis of MD4 and SAFER. In: Fast Software Encryption-FSE'94. Lect. Notes Comput. Sci., vol. 1008,  286-297. Springer, New York (1994).
\bibitem{V1999}Vaudenay S.: On the Lai-Massey scheme. In: Advances in Cryptology-ASIACRYPT-99. Lect. Notes Comput. Sci., vol. 1716, 8-19. Springer, New York (1999).
\bibitem{XLZH2018}Xu, X., Li, C., Zeng, X., Helleseth, T.:  Constructions of complete permutation polynomials. Des. Codes Cryptogr. 86, 2869-2892 (2018).
\bibitem{XZZ2022}Xu, X., Zeng, X., Zhang, S: Regular complete permutation polynomials over $\mathbb{F}_{2^n}$. Des. Codes Cryptogr. 90, 545-575 (2022).
\bibitem{ZHC2015}Zha Z., Hu  L., Cao X.: Constructing permutations and complete permutations over finite fields via subfield-valued polynomials. Finite Fields Appl. 31  162-177  (2015).













\end{thebibliography}
\end{document}